\newtheorem{proposition}{Proposition}
\newcommand{\diag}{\mathop{\mathrm{diag}}}
\newcommand{\Mod}[1]{\ (\mathrm{mod}\ #1)}
\begin{document}
\title{Vector Perturbation Channel Inversion for \\ SWIPT MU-MISO Systems}

\author{Ioannis Krikidis, \IEEEmembership{Fellow, IEEE},  Constantinos Psomas,~\IEEEmembership{Senior Member,~IEEE}, and Symeon Chatzinotas,~\IEEEmembership{Senior Member,~IEEE}\vspace*{-5mm}
\thanks{I. Krikidis and C. Psomas are with the Department of Electrical and Computer Engineering, University of Cyprus, Cyprus (email: \{krikidis, psomas\}@ucy.ac.cy); S. Chatzinotas is with the
Interdisciplinary Centre for Security, Reliability and Trust (SnT), University of Luxembourg, L-1855, Luxembourg (email: symeon.chatzinotas@uni.lu). This work has received funding from the European Research Council (ERC) under the European Union's Horizon 2020 research and innovation programme (Grant agreement No. 819819) and in part by Luxembourg National Research Fund (FNR) under the CORE project RISOTTI under Grant C20/IS/14773976.}}

\maketitle

\begin{abstract}
This letter investigates the employment of vector-perturbation (VP) precoding to convey simultaneously information and energy in multiple-user multiple-input single-output (MU-MISO) downlink channel. We show that the conventional VP in addition to the information capacity benefits that provides to linear channel inversion techniques, it enhances the harvested energy at the receivers due to the extended symbol constellation. To further boost harvesting performance,  the proposed modified VP technique (named VP-EH) designs the VP integer offsets in order to maximize the delivered power. The proposed scheme incorporates an integer least square problem to find the closest lattice point to a point which is given by a Rayleigh quotient optimization problem. Finally, a convex combination between conventional VP and VP-EH is proposed to achieve a trade-off between maximizing information or energy. Theoretical and simulations results validate that VP is a promising technique to simultaneously convey information and energy in MU-MISO systems. 
\end{abstract}
\vspace{-0.2cm}
\begin{IEEEkeywords}
Vector perturbation, wireless power transfer, SWIPT, sphere encoder, average harvested power. 
\end{IEEEkeywords}

\vspace{-0.3cm}
\section{Introduction}

\IEEEPARstart{S}{imultaneous}  wireless information and power transfer (SWIPT) exploits the dual use of radio-frequency (RF) signals to ensure communication and energy sustainability in low power devices. Over the last few years, it has attracted a tremendous attention from the academia/industry and its potential benefits have been studied from different perspectives and communication scenarios \cite{BRU}. Specifically, a scenario of high practical interest is the integration of SWIPT in multiuser multiple-input single-output (MU-MISO) downlink systems, where a multiantenna access point (AP) aims to transfer simultaneously information/energy to single-antenna terminals.  Existing solutions consider linear precoding schemes that mainly employ channel inversion at the transmitter side (e.g., zero-forcing (ZF)) or non-linear information theoretic approaches that refer to dirty-paper coding (DPC) and suffer from a higher complexity \cite{RUI}. 

A fundamental weakness of the linear channel inversion techniques is that their associated MU-MISO capacity does not scale as the number of antennas and users increase simultaneously \cite{PEE}. To overcome this limitation, non-linear vector-perturbation (VP) techniques perturb the user data by integer offsets and resolve capacity scaling through modulo operation at the receivers \cite{HOC}. VP is based on lattice theory and its encoding process is equivalent to determine the closest lattice point to a given point; the associated searching refers to sphere encoding and thus the complexity decreases in comparison to DPC \cite{LEE, MAS}. Despite its efficiency, VP has been overlooked in the literature mainly due to the higher complexity in comparison to conventional channel inversion schemes. However, recent advances in electronics/hardware significantly reduce its implementation cost  and VP becomes a promising solution for the upcoming communication systems. On the other hand, the perturbation of user data by integer offsets increases the individual power of the transmitted symbols (i.e., the transmitted lattice point conveys more power than this one in the original constellation) which in combination to the VP power scaling factor enhance the total received power at the terminals; this property is promising for SWIPT applications.

Motivated by the above observations, this is the first paper that unlocks the potential benefits of VP in SWIPT MU-MISO systems.  We show that conventional VP increases the harvested energy at the receivers and outperforms ZF in terms of both information and energy transfer. In addition, a modified VP scheme is proposed (named VP-EH), which designs the integer offsets to further boost the energy harvesting (EH) performance. Specifically, the proposed scheme adjusts the VP offsets by determining the closest lattice point to the solution of a Rayleigh quotient optimization problem aiming to maximize EH. Theoretical results validate that VP-EH outperforms conventional VP in terms of average EH while exhibits asymptotically an exponential symbol error rate (SER) decay. Finally, a SWIPT-based VP scheme is introduced which elaborates a convex combination of the two extreme policies (VP and VP-EH) to control the trade-off between maximizing information transfer  and/or maximizing EH. 

\vspace{-0.2cm}
\section{System model}

We consider an MU-MISO donwnlink setup consisting of a single AP equipped with $M$ antennas and $K\leq M$ single-antenna users \cite{HOC}. The received signals at the $K$ users can be represented by an equivalent multiple-input multiple-output (MIMO) channel i.e.,   
\vspace{-0.2cm}   
\begin{align}
\pmb{y}=\pmb{H}\pmb{x}+\pmb{n},
\end{align}
where $\pmb{H} \in \mathbb{C}^{K\times M}$ is the channel matrix with elements $h_{k,m}\sim \mathcal{CN}(0,1)$ representing the flat frequency fading channel between the $m$-th transmit antenna and the $k$-th user \cite{LEE,MAS}; $\pmb{x}$ is the transmit signal vector with $\|\pmb{x}\|^2=P$, where $P$ is the transmitted power; $\pmb{n} \in \mathbb{C}^{K\times 1}\sim \mathcal{CN}(0,\sigma^2 \pmb{I}_K)$ is the additive white Gaussian noise (AWGN) vector. We assume a perfect channel state information at the AP (appropriate training and feedback optimization ensure this assumption \cite{KOB}). To generate the transmitted signal, the AP employs channel inversion precoding (ZF) with $\pmb{F}=\pmb{H}^H(\pmb{H}\pmb{H}^H)^{-1}$. In addition, according to the principles of VP, the AP perturbs the user data $\pmb{u}=[u_1,\cdots,u_K]^T$ by an integer offset vector $\tau \pmb{l}$, where $\tau \in \mathbb{R}^{+}$ and $\pmb{l} \in \mathbb{Z}^{K}+j\mathbb{Z}^{K}$; $u_k \in \mathcal{B}$, where $\mathcal{B}$ denotes the symbol constellation set. The transmitted signal can be written as   
\vspace{-0.1cm}   
\begin{align}
\pmb{x}=\sqrt{\frac{P}{\gamma}}\pmb{F}(\pmb{u}+\tau \pmb{l}), \label{system}
\end{align}
where $\gamma=\|\pmb{F}(\pmb{u}+\tau \pmb{l})\|^2$ is the transmit power scaling factor. By considering ZF precoding, the received signal vector is given by
\vspace{-0.2cm}   
\begin{align}
\pmb{y}=\pmb{H}\pmb{x}+\pmb{n}=\sqrt{\frac{P}{\gamma}}(\pmb{u}+\tau\pmb{l})+\pmb{n}.
\end{align}
Although practical SWIPT architectures split/orthogonalize the resources to convey information and energy (e.g., power splitting, time-switching) \cite{BRU}, here (for simplicity), we assume an ideal SWIPT architecture that allows to extract information/energy from the same signal without losses\footnote{The extension of the proposed scheme to practical SWIPT architectures is a straightforward extension.}. For the information transfer, the received signal is multiplied by the factor $\sqrt{\gamma/P}$ to eliminate the effect of the  transmit power scaling and then is driven to a modulo $\tau$ operator to remove the perturbation vector \cite{PEE}. The received signal at the $k$-th receiver is given by
\begin{align}
y_k=\sqrt{\frac{\gamma}{P}}y_k \Mod{\tau} =u_k+w_k,
\end{align}
where $w_k=\sqrt{\gamma/P}n_k$ is the equivalent AWGN component. On the other hand, by using a linear energy harvesting model, the total harvested power\footnote{We assume a normalized transmission time and therefore the notions of energy and power become equivalent and can used interchangeably.} is given by
\begin{align}
Q=\frac{P \|\pmb{u}+\tau \pmb{l}\|^2}{\gamma}\triangleq \frac{\|\pmb{u}'\|^2}{\gamma}. \label{vp_EH}
\end{align}
\vspace{-0.1cm}   
It is worth noting that the harvested power for the $k$-th user is given by $q_k=|u_k+\tau l_k|^2/\gamma$. From \eqref{vp_EH}, it can be seen that VP has a double effect on the energy harvesting; it affects the denominator through the transmit scaling factor as well as the nominator through the VP integer offset. 

\vspace{-0.4cm}
\subsection{Conventional VP precoding}

The conventional VP scheme designs the perturbation vector to minimize the transmit power scaling factor i.e.,  
\begin{align}
\pmb{l}_0&=\arg \min_{\pmb{l}'\in \mathbb{Z}^K+j\mathbb{Z}^K} \gamma 
\;\;= \arg \min_{\pmb{l}'\in \mathbb{Z}^K+j\mathbb{Z}^K} \|\pmb{F}(\pmb{u}+\tau \pmb{l}')\|^2 \nonumber \\
&=\arg \min_{\pmb{l}'\in \mathbb{Z}^K+j\mathbb{Z}^K} \|\pmb{F}\pmb{u}-(-\pmb{F}\tau \pmb{l}')\|^2. \label{VP1}
\end{align}
The above formulation  is an integer least square (ILS) problem and is equivalent to finding the closest lattice point to a given point. Although the ILS problem is NP-hard, sphere decoding\footnote{Since the SD is applied at the transmitter side, the process is well-known as {\it sphere encoder} in the literature \cite{HOC, MAS}.} (SD) is an efficient systematic search that can solve it in polynomial time. It is worth noting that several algorithms have been proposed in the literature to implement SD; in this work, we adopt the well-known Schnorr-Euchner (SE) scheme which is based on recursive tree searching without loss of generality \cite{MAS}. If $\pmb{z}=F_{\text{SE}}(L,\pmb{y},\pmb{A})$ denotes the SE algorithm that solves the standard ILS problem i.e., $\arg \min_{x}\|\pmb{y}-\pmb{A}\pmb{z}\|$  with dimension size $L$, the conventional VP scheme in \eqref{VP1} can be represented as \cite{MAS}
\begin{align}
\pmb{l}_0=F_{\text{SE}}(K,\pmb{F}\pmb{u},-\pmb{F}\tau).
\end{align} 

\vspace{-0.4cm}
\section{A SWIPT-based VP precoding scheme}\label{sec3}

The conventional VP scheme has been designed for information transfer and achieves a performance close to the Shannon capacity in MU-MISO downlink systems \cite{PEE,HOC}. In SWIPT systems, although it significantly affects/boosts EH at the receivers (see \eqref{vp_EH}), this aspect is not taken into account in the design. In this section,  we fill this gap and we investigate a VP scheme that controls the delivered power at the users.

In the first step of the SWIPT-based VP scheme, we relax the assumption that the AP transmits perturbated user data (belonging in $\mathcal{B}$) and investigate a normalized transmit vector $\pmb{u}_{\text{EH}}$ that maximizes the total received power $Q$; for this case, the transmitted signal becomes $\pmb{x}_{\text{EH}}=\sqrt{P}\pmb{F}\pmb{u}_{\text{EH}}/\|\pmb{F}\pmb{u}_{\text{EH}}\|$. More specifically, we introduce the following optimization problem 
\vspace{-0.2cm}   
\begin{align}
\pmb{u}_{\text{EH}}&=\arg \max_{\pmb{u}, \|\pmb{u}\|=1}\frac{\|\pmb{u}\|^2}{\pmb{u}^H \pmb{F}^H \pmb{F}\pmb{u}}=\arg \max_{\pmb{u}, \|\pmb{u}\|=1}\frac{\pmb{u}^H \pmb{u}}{\pmb{u}^H (\pmb{H}\pmb{H}^H)^{-1} \pmb{u}^H} \nonumber \\
&=\arg \min_{\pmb{u}, \|\pmb{u}\|=1}\frac{\pmb{u}^H (\pmb{H}\pmb{H}^H)^{-1} \pmb{u}^H}{\pmb{u}^H \pmb{u}}. \label{re_q}
\end{align} 
Since the matrix $\pmb{H}\pmb{H}^H$ is Hermitian and positive definite, the formulation in \eqref{re_q} is a (standard) Rayleigh quotient \cite{HOR}; if $\pmb{H}\pmb{H}^H=\pmb{V}\pmb{\Lambda}\pmb{V}^H$ is the eigenvalue decomposition of the matrix $\pmb{H}\pmb{H}^H$, where $\pmb{V}=[\pmb{v}_1,\ldots,\pmb{v}_K]\in \mathbb{C}^{K\times K}$, $\pmb{V}\pmb{V}^H=\pmb{V}^H\pmb{V}=\pmb{I}_K$ and $\pmb{\Lambda}=\diag(\lambda_1,\ldots,\lambda_K)$ with $\lambda_1\geq \lambda_2\geq\ldots \lambda_K\geq 0$, the optimal solution of problem \eqref{re_q} is $\pmb{u}_{\text{EH}}=\pmb{v}_1$ (i.e., the eigenvector corresponding to the largest eigenvalue\footnote{Since the matrix $\pmb{H}\pmb{H}^H$ is Hermitian and positive definite, its inverse admits also similar characteristics; the minimum eigenvalue of the matrix $(\pmb{H}\pmb{H}^H)^{-1}$ is the maximum eigenvalue of the matrix $\pmb{H}\pmb{H}^H$.}) which gives $Q=P\lambda_1$ . We note that any scaled version of the vector $\pmb{u}_{\text{EH}}$ i.e., $c\pmb{u}_{\text{EH}}$ with $c \in \mathbb{C}$ is a solution of \eqref{re_q} without affecting the maximum value of the total harvested power $Q$.  

Although the vector $c \pmb{u}_{\text{EH}}$ maximizes the total received power at the receivers, it does not take into account the fact that the user data $\pmb{u}$ are predefined and take values in the symbol constellation set $\mathcal{B}$. By using the degree of freedom of the offset perturbation vector $\tau \pmb{l}$, we perturb the user data in such a way that the perturbed data (lattice point) are as close as possible to $c \pmb{u}_{\text{EH}}$; it is worth noting that the scaling factor $c$ offers an extra degree of freedom to minimize the distance between these two points. This searching problem can be formulated as a modified ILS and is given by
\begin{align}
\pmb{l}_{\text{EH}}=\arg \min_{\pmb{l}'\in \mathbb{Z}^K+j\mathbb{Z}^K,c \in \mathbb{C}}\|\pmb{u}+\tau \pmb{l}'-c\pmb{u}_{\text{EH}}\|^2. \label{eh_l}
\end{align}
The problem in \eqref{eh_l} can be solved in two steps. In the first step, we compute $c$ that minimizes \eqref{eh_l} for a given perturbation vector $\pmb{l}'$ and then we solve a standard ILS problem by using the conventional SE algorithm. Specifically, by denoting $\xi(\pmb{l}')=\min_{c \in \mathbb{C}}\|\pmb{u}+\tau\pmb{l}'-c\pmb{u}_{\text{EH}}\|^2$, the problem in \eqref{eh_l} is written as
\vspace{-0.1cm}
\begin{align}
\pmb{l}_{\text{EH}}=\arg \min_{\pmb{l}'\in \mathbb{Z}^K+j\mathbb{Z}^K}\xi(\pmb{l}'). \label{eh_l2}
\end{align}
Since $\xi(\pmb{l})'$ is an unconstrained scalar optimization problem, the derivative of the objective function is equal to zero in the optimal point. Therefore, we have
\begin{align}
\frac{\partial \|\pmb{u}+\tau\pmb{l}'-c\pmb{u}_{\text{EH}}\|^2}{\partial c}=0 \Rightarrow c=\frac{(\pmb{u}+\tau \pmb{l}')^H \pmb{u}_{\text{EH}}}{\|\pmb{u}_{\text{EH}}\|^2}. \label{eh_2}
\end{align} 
Plugging \eqref{eh_2} into \eqref{eh_l2} and after some manipulations, we have
\begin{align}
\pmb{l}_{\text{EH}}&=\arg \min_{\pmb{l}'\in \mathbb{Z}^K+j\mathbb{Z}^K} \left \|\pmb{Z}\pmb{u}-(-\pmb{Z}\tau)\pmb{l}' \right\| \nonumber \\
&=F_{\text{SE}}\left(K, \pmb{Z}\pmb{u},-\pmb{Z}\tau \right),
\end{align}
where $\pmb{Z}=\pmb{I}_K-\frac{\pmb{u}_{\text{EH}}\pmb{u}_{\text{EH}}^H}{\|\pmb{u}_{\text{EH}}\|^2}=\pmb{I}_K-\pmb{u}_{\text{EH}}\pmb{u}_{\text{EH}}^H$ with $\|\pmb{u}_{\text{EH}}\|^2=1$.
The above modified VP scheme (called VP-EH) maximizes the delivered power but does not take into account the performance of information transfer in the design process; on the other hand, conventional VP has the opposite  objective. Inspired by these two extreme policies that focus either on the information or the energy transfer, we provide a new VP scheme (called VP-SWIPT) that achieves a trade-off between them. To control this trade-off, we introduce the convex combination between the two extreme lattice operation points i.e., $\pmb{\delta}(\eta)=\pmb{F}(\pmb{u}+\tau\pmb{l}_0) \eta+\pmb{F}(\pmb{u}+\tau\pmb{l}_{\text{EH}})(1-\eta)$, where $\eta \in [0,1]$ is a design parameter that represents a desired operation point at the linear segment between the two extreme lattice points. Then, for a specific value of $\eta$, we search for a lattice point that is the closest to this operation point. Specifically, we consider the following ILS problem which is solved through appropriate parametrization of the SE algorithm  
\begin{align}
\pmb{l}_{\text{SW}}&=\arg\min_{\pmb{l}'\in \mathbb{Z}^K+j\mathbb{Z}^K} \|\pmb{F}(\pmb{u}+\tau \pmb{l}')-\pmb{\delta}(\eta)\|^2 \nonumber \\
&=\arg\min_{\pmb{l}'\in \mathbb{Z}^K+j\mathbb{Z}^K} \|(\pmb{F}\pmb{u}-\pmb{\delta}(\eta))-(-\pmb{F}\tau)\pmb{l}'\|^2 \nonumber \\
&=F_{\text{SE}}\left(K, \pmb{F}\pmb{u}-\pmb{\delta}(\eta),-\pmb{F}\tau \right).
\end{align}
It is worth noting that the values $\eta=0$ and $\eta=1$ correspond to the VP and the VP-EH scheme, respectively.

\vspace{-0.3cm}
\section{Performance analysis}

In this section, we study theoretically the performance of the proposed VP-based schemes. 

\vspace{-0.3cm}
\subsection{Conventional VP precoding}

We study the conventional VP in terms of average harvested power across both users and channel realizations. Due to the complexity associated with the integer values of $\pmb{l}$, we investigate a tractable approximation/bound. Specifically, we have 
\begin{align}
\mathcal{E}_{\text{VP}}=\mathbb{E}(q_k)=\mathbb{E} \left(\frac{|\tilde{u}_k|^2}{\gamma} \right)\approx \frac{\mathbb{E}(|\tilde{u}_k|^2)}{\mathbb{E}(\gamma)}. \label{ff1}
\end{align}
Due to the symmetry between the real and the imaginary parts, the nominator can be written as
\begin{align}
\mathbb{E}(|\tilde{u}_k|^2)&=\mathbb{E}(|u_k+\tau l_{0,k}|^2)= \sum_{u' \in \mathcal{B}} 
\sum_{l' \in \mathbb{Z}+j\mathbb{Z}^K} \!\!\mathbb{P}(u',l')|u'+\tau l'|^2 \nonumber \\
&\approx 2 \sum_{u_r' \in \Re(\mathcal{B})} 
\sum_{l_r'=-1}^{l_r'=1} \mathbb{P}(u_r',l_r')(u_r'+\tau l_r')^2, \label{joint}
\end{align}
where the notation $\mathcal{R}(\cdot)$ denotes the real part of a variable or set, $u_r'=\mathcal{R}(u')$, $l_r'=\mathcal{R}(l')$, $\mathbb{P}(u_r',l_r')$ is the joint distribution of the variables $u_r'$ (transmitted symbol - real part) and $l_r'$ (associated integer perturbation offset - real part). We note that the perturbation offset takes values in a small set i.e., $\{-1,0,1\}$ and the joint distribution has an anti-symmetry property; this observation has been discussed in \cite{YUE} (through numerical studies) and validated here as well (Section \ref{numerical}, Table \ref{table1}). On the other hand, by using the seminal result in \cite{RYA}, the denominator is approximated by the following lower bound 
\begin{align}
\mathbb{E}(\gamma)&\geq \frac{K \Gamma(K+1)^{1/K}}{(K+1)\pi} \tau^2 \det(\pmb{F}^H \pmb{F})^{1/K} \nonumber \\
&=\frac{K \Gamma(K+1)^{1/K}}{(K+1)\pi} \tau^2 \det(\pmb{H}\pmb{H}^H)^{-1/K}  \nonumber \\
&=\frac{K \Gamma(K+1)^{1/K}}{(K+1)\pi} \tau^2 \prod_{m=0}^{K-1}\frac{\Gamma(M-\frac{1}{K}-m)}{\Gamma(M-m)}, \label{res4}
\end{align}
where $\Gamma(\cdot)$ denotes the gamma function. In the following proposition, we compare (qualitative comparison) the VP scheme with the conventional ZF scheme in terms of SWIPT; it is worth noting that the information transfer performance of the VP scheme is well known in the literature and therefore is not further discussed \cite{HOC}.  
\begin{proposition}\label{prop1}
The VP precoding scheme outperforms conventional ZF in terms of both information and power transfer.
\end{proposition}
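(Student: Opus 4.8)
The plan is to prove both claims \emph{instance-wise}, for an arbitrary full-rank channel matrix $\pmb{H}$ (which holds almost surely) and an arbitrary data vector $\pmb{u}$; the instance-wise statements then imply the averaged ones. The single fact that drives everything is that ZF is merely the special case $\pmb{l}'=\pmb{0}$ of the transmit rule \eqref{system}: since $\pmb{l}_0$ is by construction the minimizer in \eqref{VP1}, using $\pmb{l}'=\pmb{0}$ as a feasible point gives
\begin{align}
\gamma_{\text{VP}}\triangleq\|\pmb{F}(\pmb{u}+\tau\pmb{l}_0)\|^2\;\leq\;\|\pmb{F}\pmb{u}\|^2\triangleq\gamma_{\text{ZF}}, \label{eq:gamcmp}
\end{align}
with strict inequality whenever $\pmb{l}_0\neq\pmb{0}$, an event whose probability approaches one as $K$ and $M$ grow. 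I would then treat the two metrics in turn.

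For the information transfer, after the prescribed receiver processing (rescaling by $\sqrt{\gamma/P}$ and, for VP, the modulo-$\tau$ reduction) both schemes present the decoder with the same effective channel $y_k=u_k+w_k$, $w_k=\sqrt{\gamma/P}\,n_k$: the symbol alphabet is $\mathcal{B}$ in both cases and the only difference is the effective noise power $\sigma^2\gamma/P$, so the per-user receive SNR equals $P/(\sigma^2\gamma)$. By \eqref{eq:gamcmp} this SNR is no smaller for VP, hence the achievable rate of VP dominates that of ZF; this is precisely the well-documented capacity-scaling gain of VP over linear channel inversion \cite{HOC,PEE} (the residual modulo/shaping-loss term being common to the two schemes and asymptotically vanishing), which I would invoke directly rather than re-derive.

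For the power transfer, \eqref{vp_EH} yields $Q_{\text{VP}}=\|\pmb{u}+\tau\pmb{l}_0\|^2/\gamma_{\text{VP}}$ against $Q_{\text{ZF}}=\|\pmb{u}\|^2/\gamma_{\text{ZF}}$; by \eqref{eq:gamcmp} the denominator of $Q_{\text{VP}}$ is no larger, so it remains to show the numerator does not shrink, that is, $\|\pmb{u}+\tau\pmb{l}_0\|^2\geq\|\pmb{u}\|^2$. I would prove this component-by-component over the $2K$ real and imaginary parts: with $c_{\max}=\max_{b\in\mathcal{B}}\max\{|\Re(b)|,|\Im(b)|\}$, the value of $\tau$ forced by the receiver modulo operation obeys $\tau>2c_{\max}$, so on any component that carries a nonzero integer offset $\ell$ one has $|u+\tau\ell|\geq\tau|\ell|-c_{\max}\geq\tau-c_{\max}>c_{\max}\geq|u|$, while components with zero offset are unchanged. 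Summing the squared magnitudes gives $\|\pmb{u}+\tau\pmb{l}_0\|^2\geq\|\pmb{u}\|^2$, hence $Q_{\text{VP}}\geq Q_{\text{ZF}}$, and the inequality is strict whenever $\pmb{l}_0\neq\pmb{0}$, hence strict in expectation.

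The main obstacle is this numerator step: unlike the denominator, $\|\pmb{u}+\tau\pmb{l}_0\|^2$ is \emph{not} the object minimized by \eqref{VP1}, so a priori the perturbation could decrease it. The coordinate-wise bound above is what forecloses that possibility, and it rests entirely on the constellation/modulo constraint $\tau>2c_{\max}$ --- precisely the ``extended-constellation'' effect flagged in the discussion following \eqref{vp_EH}. Everything else reduces to the elementary feasibility inequality \eqref{eq:gamcmp} together with the known VP information-rate results.
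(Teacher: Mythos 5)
Your proposal is correct and follows essentially the same route as the paper's own proof: the feasibility argument $\gamma_{\text{VP}}\leq\gamma_{\text{ZF}}$ (since $\pmb{l}'=\pmb{0}$ is a candidate in \eqref{VP1}), the appeal to \cite{PEE,HOC} for the information-transfer claim, and the constellation/modulo bound $|u+\tau\ell|>|u|$ for nonzero offsets to show the numerator of $Q$ cannot decrease. Your treatment is in fact slightly more careful than the paper's, since you explicitly separate the zero-offset components and work coordinate-wise over real and imaginary parts rather than only addressing the $l_k\neq 0$ case.
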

\begin{proof}
See Appendix \ref{app1}.
\end{proof}

\vspace{-0.5cm}
\subsection{VP-EH precoding}

The EH performance of the VP-EH scheme approximates the performance of the ideal scenario where the AP transmits energy signals i.e., $\pmb{u}_{\text{EH}}$ (without the constraint of $\mathcal{B}$). By following the discussion in Section \ref{sec3}, the harvested power at the $k$-th user is written as 
\begin{align}
\mathcal{E}_{\text{EH}}\approx \frac{P \mathbb{E}(\lambda_1)}{K}=\frac{P}{K}\int_{0}^{\infty}(1-F_{\lambda}(x))dx, \label{res5}
\end{align}
where $F_{\lambda} (x)$ is the cumulative distribution function (CDF) of the maximum eigenvalue of a complex Wishart matrix, written as \cite[Sec. III.A]{CHI}
\begin{align}
F_{\lambda}(x)=\phi \det(\pmb{A}(x)),
\end{align}
where the elements of the matrix $\pmb{A}(x)$ and the constant $\phi$ are given respectively by
\begin{align}
A_{i,j}(x)&=\int_{0}^{x} t^{M+K-i-j}\exp(-t)dt \nonumber \\
&=\gamma(M+K-i-j+1,x),
\end{align}
and
\vspace{-0.3cm}
\begin{align}
&\phi=1/\prod_{i=1}^{K}(M-i)!(K-i)!,
\end{align}
where $\gamma(s,x)=\int_{0}^{x}t^{s-1}\exp(-t)dt$ in the lower incomplete gamma function.

Then, we study the information transfer performance of the VP-EH scheme. By using the approximation $(\pmb{u}+\tau \pmb{l}_{\text{EH}})\approx c \pmb{u}_{\text{EH}}$, we have $\gamma=\|\pmb{F}(\pmb{u}+\tau \pmb{l}_{\text{EH}})\|^2\approx \|c\pmb{F} \pmb{u}_{\text{EH}}\|^2= \|c\pmb{F} \pmb{v}_1\|^2=|c|^2 \pmb{v}_1^H \pmb{V}\pmb{\Lambda}^{-1}\pmb{V}^H \pmb{v}_1=|c|^2 \lambda_K$; therefore, the information transfer branch can be simplified to the following equivalent channel model i.e., 
\begin{align}
y_k=u_k+|c|\sqrt{\frac{\lambda_K}{P}}n_k, \label{equiv}
\end{align}
where $\lambda_K$ is the minimum eigenvalue of the complex Wishart matrix $\pmb{H}\pmb{H}^H$. The equivalent channel in \eqref{equiv} can be used to study the asymptotic behaviour ($P\rightarrow \infty$) of the VP-EH scheme as it is stated in following proposition. 

\vspace{-0.2cm}
\begin{proposition}\label{prop2}
The outage probability of the EH-based VP scheme is proportional to $\exp(-K P)$ (exponential decay). Since the conventional VP exhibits a diversity gain equal to $M$, the VP-EH scheme outperforms conventional VP asymptomatically in both information and energy transfer.  
\end{proposition}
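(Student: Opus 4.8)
The plan is to work from the equivalent channel model in \eqref{equiv}, namely $y_k = u_k + |c|\sqrt{\lambda_K/P}\, n_k$, and to treat the SER (hence outage) as a function of the instantaneous effective noise power $|c|^2 \lambda_K \sigma^2 / P$. First I would argue that for any fixed constellation $\mathcal{B}$ with minimum distance $d_{\min}$, a symbol error at the $k$-th receiver occurs (up to constants from the union bound / $Q$-function) essentially when the noise magnitude exceeds $d_{\min}/2$, so conditionally on the channel the error probability behaves like $\exp\!\big(-\frac{c_0 P}{|c|^2 \lambda_K \sigma^2}\big)$ for a constant $c_0$ depending only on $\mathcal{B}$. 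The key structural observation I would invoke is that, by the analysis of Section~\ref{sec3}, the relevant denominator in the VP-EH design is $\gamma \approx |c|^2 \lambda_K$, while the numerator (the received energy) is essentially constant of order one after normalization; so $|c|^2\lambda_K$ is bounded and bounded away from zero with probability one (it does not vanish the way $\gamma$ could for conventional VP), which removes the channel-induced randomness from dominating the exponent. Averaging the conditional error probability over the channel then leaves an expression of the form $\mathbb{E}\big[\exp(-c_0 P / (|c|^2\lambda_K\sigma^2))\big]$, and since $|c|^2\lambda_K$ is $\Theta(1)$ a.s., this average decays like $\exp(-c_1 P)$ for some constant $c_1>0$; collecting the contributions of the $K$ users (or, equivalently, tracking the $K$-fold structure coming from the $K$ eigenvalue/user pairs) gives the claimed $\exp(-KP)$ scaling.

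Next I would address the comparison with conventional VP. The diversity order $M$ of conventional VP is the standard result of \cite{HOC,RYA}: there the outage probability decays only polynomially, like $P^{-M}$, because it is governed by $\mathbb{E}(\gamma)$ which scales like $P^{-1}$ per the bound \eqref{res4}, and the fading of $\det(\pmb H\pmb H^H)$ induces the power-law tail. Since an exponential $\exp(-KP)$ eventually dominates any polynomial $P^{-M}$ as $P\to\infty$, VP-EH wins asymptotically in information transfer. For energy transfer, I would simply combine \eqref{res5} with \eqref{ff1}: $\mathcal{E}_{\text{EH}} \approx P\,\mathbb{E}(\lambda_1)/K$ whereas $\mathcal{E}_{\text{VP}} = \mathbb{E}(|\tilde u_k|^2)/\mathbb{E}(\gamma)$, and bounding $\mathbb{E}(\gamma)$ below by \eqref{res4} and $\mathbb{E}(|\tilde u_k|^2)$ above by a constant (the perturbed symbols lie in a bounded region since $l_r'\in\{-1,0,1\}$) shows $\mathcal{E}_{\text{VP}}$ is $O(1)$, while $\mathcal{E}_{\text{EH}}$ grows linearly in $P$ since $\mathbb{E}(\lambda_1)>0$ is a fixed constant; hence VP-EH also dominates in energy. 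These two facts together give the proposition.

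The main obstacle I anticipate is making the claim "$|c|^2\lambda_K = \Theta(1)$ almost surely" rigorous enough to justify the exponent $KP$ rather than something messier. The approximation $(\pmb u + \tau\pmb l_{\text{EH}}) \approx c\,\pmb u_{\text{EH}}$ is only heuristic — the ILS step guarantees the lattice point is close to the line $\mathbb{C}\pmb u_{\text{EH}}$ but the residual distance $\xi(\pmb l_{\text{EH}})$ is not exactly zero — so the cleanest route is to state the asymptotics at the level of the equivalent model \eqref{equiv} as the paper already does, treat $|c|$ as a channel-dependent quantity of constant order, and extract the exponential decay from Laplace/Chernoff-type bounds on $\mathbb{E}[\exp(-c_0 P/(|c|^2\lambda_K\sigma^2))]$. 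Pinning down exactly why the prefactor in the exponent is $K$ (as opposed to an unspecified positive constant) likely relies on the per-user decomposition and the joint scaling of the $K$ ordered eigenvalues; I would either make this precise via the known joint eigenvalue density of the Wishart matrix $\pmb H\pmb H^H$, or — in the spirit of this letter — present it as the dominant-exponent behaviour, which is all that is needed to conclude that an exponential SER decay beats the order-$M$ polynomial diversity of conventional VP.
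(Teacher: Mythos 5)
Your proposal follows the paper as far as the equivalent channel \eqref{equiv}, but the mechanism you use to extract the exponential decay is not the paper's and, as stated, does not work. The paper evaluates the \emph{outage probability at a fixed rate} $r$: with effective noise $|c|^2\lambda_K\sigma^2/P$, the outage event is $\lambda_K > P/\big(|c|^2(2^r-1)\big)$, i.e., an \emph{upper-tail} event for the minimum eigenvalue of the $K\times K$ Wishart matrix $\pmb{H}\pmb{H}^H$. Since all $K$ eigenvalues must exceed a threshold that grows linearly in $P$ and the joint eigenvalue density carries the factor $\exp(-\sum_i\lambda_i)$, Chiani's determinant formula with upper incomplete gamma entries yields $\mathbb{P}(\lambda_K>x)\doteq \mathrm{const}\cdot\exp(-Kx)$, which is exactly where the factor $K$ in the exponent comes from — not from a "per-user decomposition," which is what you guess at the end. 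Your route instead averages a conditional error probability $\exp\!\big(-c_0P/(|c|^2\lambda_K\sigma^2)\big)$ and justifies the $\exp(-c_1P)$ conclusion by asserting that $|c|^2\lambda_K$ is "$\Theta(1)$ almost surely." That assertion is false: $\lambda_K$ is a continuous random variable with support $(0,\infty)$ and an exponential upper tail, not a quantity bounded above and below. Moreover, if you actually carry out the average $\mathbb{E}\big[\exp(-c_0P/\lambda_K)\big]$ over the Wishart law, a saddle-point evaluation (balancing $c_0/t$ against the tail $e^{-KtP}$ at $\lambda_K\approx tP$) gives a decay of order $\exp(-c\sqrt{P})$, not $\exp(-cP)$ — so your approach, made rigorous, produces a strictly weaker (and different) answer than the proposition claims. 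The step you are missing is precisely the identification of outage with the large-deviation event $\{\lambda_K \gtrsim P\}$ and its tail $\exp(-KP)$.

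Two smaller issues. First, your energy-transfer comparison claims $\mathcal{E}_{\text{VP}}=O(1)$ while $\mathcal{E}_{\text{EH}}=\Theta(P)$; this is a misreading — both harvested powers scale linearly in $P$ (cf. \eqref{vp_EH} and Fig.~\ref{fig2}, where the gain of VP-EH over VP is a constant $2.5$ dB), and the comparison is between the constants, via \eqref{res5} versus \eqref{ff1}--\eqref{res4}. Second, your explanation of the diversity order $M$ of conventional VP ("governed by $\mathbb{E}(\gamma)$ which scales like $P^{-1}$") is garbled, since $\mathbb{E}(\gamma)$ does not depend on $P$; the paper simply cites the known result, which is the right thing to do here. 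The boundedness argument you give for $|c|$ (via Cauchy--Schwarz and finiteness of $\pmb{u},\pmb{l}_{\text{EH}}$) does match the paper's treatment.
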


\begin{proof}
See Appendix \ref{app2}
\end{proof}

It is worth noting that the above behaviour is observed in the high signal-to-noise ratio (SNR) regime; the outage probability decays exponentially with a rate of decay $1-\exp(-K)$. For the low and intermediate SNRs (which is the regime of practical interest), the conventional VP outperforms VP-EH in terms of outage probability.  

\vspace{-0.4cm}
\subsection{VP-SWIPT precoding}

The VP-SWIPT scheme is a convex combination of the VP and VP-EH schemes and therefore provides a performance between them. The performance depends on the parameter $\eta$ (in a non-linear way) as well as the density of the lattice (which provides the closest transmitted vector $\pmb{u}+\tau \pmb{l}_{\text{SW}}$); numerical results in the next section demonstrate the non-linear relationship of the VP-SWIPT  with the VP/VP-EH schemes.

\begin{figure}
\centering
\includegraphics[width=0.6\linewidth]{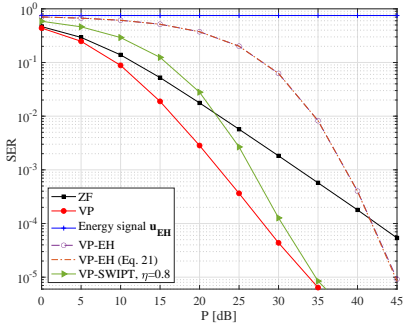}
\vspace{-0.3cm}
\caption{SER performance versus transmit power $P$; $M=K=2$, $\sigma^2=1$, $\eta=0.8$, $4$-QAM, $\tau=4$.}\label{fig1}
\end{figure}

\begin{figure}
\centering
\includegraphics[width=0.65\linewidth]{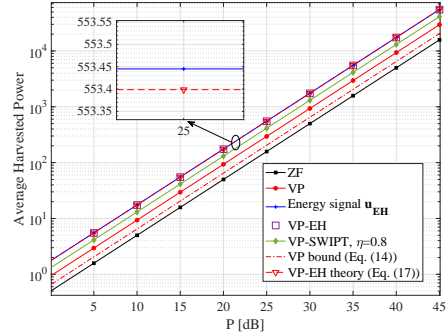}
\vspace{-0.3cm}
\caption{Avreage harvested power versus transmit power $P$; $M=K=2$, $\eta=0.8$, $4$-QAM, $\tau=4$.}\label{fig2}
\end{figure}

\begin{figure}
	\centering
	\includegraphics[width=0.7\linewidth]{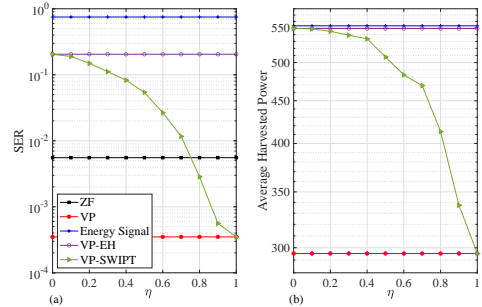}
	\vspace{-0.3cm}
	\caption{(a) SER versus $\eta$, (b) Average harvested power versus $\eta$; $P=25$ dB; $M=K=2$, $\sigma^2=1$, $4$-QAM, $\tau=4$.}\label{fig3}
\end{figure}

\vspace{-0.4cm}
\section{Numerical results \& discussion}\label{numerical}

Our setting assumes $M=K=2$, $4$-QAM with $\tau=4$, $\sigma^2=1$, and $\eta=0.8$. Table \ref{table1} shows the joint probability distribution of perturbation for the considered setup and validates our discussion in \eqref{joint}.

Fig. \ref{fig1} depicts the average SER performance versus the transmit power for the VP-based schemes considered; the ZF scheme and the transmission of energy signals $\pmb{u}_{\text{EH}}$ are used as performance benchmarks. The VP scheme provides the best SER performance in the SNR regime of interest and ensures full diversity in comparison to ZF scheme. On the other hand, the VP-EH scheme follows an exponential decay and outperforms ZF at high SNRs; the slopes of the curves validate the asymptotic gain of the VP-EH scheme against the conventional VP scheme (Proposition \ref{prop2}). It can be also shown that the the VP-SWIPT scheme provides a performance between VP and VP-EH schemes and approximates VP at high SNRs.  Fig. \ref{fig2} compares the VP schemes in terms of average harvested power. It can be observed that the VP-EH scheme approximates the performance of the EH benchmark (transmission of $\pmb{u}_{\text{EH}}$) while providing a gain of $2.5$ dB and $7$ dB in comparison to VP and ZF schemes, respectively. The VP-SWIPT provides a performance  between the two extreme policies, which is inline with the information transfer observations in Fig. \ref{fig1}. Our curves validate also our theoretical derivations in \eqref{ff1} and \eqref{res5}. 

Fig. \ref{fig3} focuses on the VP-SWIPT scheme  and studies the impact of the parameter $\eta$. It can be seen that VP-SWIPT scheme provides an information/energy performance between the two extreme policies (VP and VP-EH) by following a non-linear dependency with $\eta$. Finally, Fig. \ref{fig4} deals with the EH performance for a setup with $M=K=4$ and $16$-QAM; the observations follow the discussion of Fig. \ref{fig2} but we can observe a higher gain for the VP-EH mainly due to the higher modulation (i.e., $4.5$ dB against VP and $10$ dB against ZF).  Our theoretical and simulation results validate that VP introduces a new degree of freedom to control information and energy transfer and significantly outperforms conventional linear counterparts; it is a powerful tool for SWIPT MU-MISO systems. Other channel inversion schemes and/or channel models can be considered for future work.

\begin{figure}
	\centering
	\includegraphics[width=0.6\linewidth]{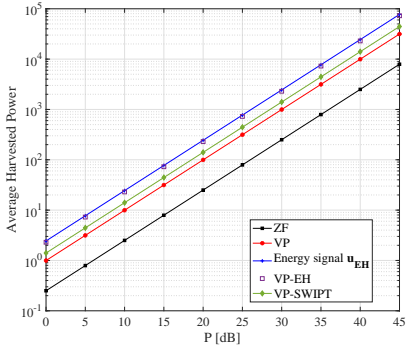}
	\vspace{-0.3cm}
	\caption{Average harvested power versus $P$; $M=K=4$, $16$-QAM, $\tau=8$.} \label{fig4}\vspace{-0.2cm}
\end{figure}

\begin{table}[t]\centering\caption{Joint probability distribution of perturbation $\mathbb{P}(l_{0,k} = x, u_{k} = y)$; $M=K=2$, $4$-QAM.}\label{table1}
	\vspace{-0.3cm}
\setlength\tabcolsep{4.5pt}\renewcommand\arraystretch{1.5}
\begin{tabular}{|c|c|c|c|}\hline
\diagbox[width=\dimexpr\textwidth/14\relax,height=0.85cm]{$u_k$}{$l_{0,k}$}
& $-1$ & $0$ & $+1$\\\hline
$-1$ & $0.0098$ & $0.3972$ & $0.0915$\\\hline
$+1$ & $0.0915$ & $0.3972$ & $0.0098$\\\hline
\end{tabular}\vspace{-0.3cm}
\end{table}
\vspace{-0.4cm}
\appendix
\subsection{Proof of Proposition 1}\label{app1}
We note that VP is equivalent to ZF for the case where $\pmb{l}=\pmb{0}_K$ (with $\pmb{0}_K$ denoting the zero vector). The VP scheme is based on the minimization of the power scaling factor $\gamma$ and therefore outperforms ZF in terms of information transfer (i.e., sum-capacity,  diversity etc); this has been proven in several previous works \cite{PEE,HOC}. As for the power transfer, the VP scheme achieves a harvested power $Q$ which is larger or equal to the ZF scheme. Specifically, the denominator in \eqref{vp_EH} follows the discussion of the information transfer i.e., $\gamma_{\text{VP}}\leq \gamma_{\text{ZF}}$; for the nominator, we focus on the case with a non-zero perturbation vector ($l_k \neq 0$). By assuming a $N$-QAM modulation with maximum and minimum absolute values $|c_{\text{max}}|$ and $|c_{\text{min}}|$ respectively, a minimum distance in the constellation $\Delta$ and $\tau=2|c_{\text{max}}|+\Delta$ (similar to \cite[eq. 9]{HOC}), we have
\vspace{-0.2cm}
\begin{align}
&|u_k+\tau l_k| \geq ||c_{\text{min}}|-\tau| = |c_{\text{max}}|+\frac{\sqrt{N}}{2} \Delta >|c_{\text{max}}|  \\
&\Rightarrow \|\pmb{u}+\tau \pmb{l}\|^2 > \|\pmb{u}\|^2 \Rightarrow Q_{\text{VP}}=\frac{\|\pmb{u}+\tau \pmb{l}\|^2}{\gamma_{\text{VP}}}
>Q_{\text{ZF}}=\frac{\|\pmb{u}\|^2}{\gamma_{\text{ZF}}}.
\end{align}
By considering all the possible values of $l_k \in \mathbb{Z}+j\mathbb{Z}$, we prove that $\mathbb{E}(Q_{\text{VP}})\geq \mathbb{E}(Q_{\text{ZF}})$.

\vspace{-0.2cm}
\subsection{Proof of Proposition 2}\label{app2}
We study the asymptotic performance of the EH-based VP precoding scheme by using the outage probability metric. Specifically, by using the equivalent channel in \eqref{equiv}, the outage probability can be written as
\vspace{-0.2cm}
\begin{align}
&P_{\text{out}}=\mathbb{P}(\log_2(1+\text{SNR})<r)=\mathbb{P}\left(\frac{P}{|c|^2 \lambda_K}<2^r-1 \right) \nonumber \\
&=\mathbb{P}\left(\! \lambda_K\!>\!\frac{P}{|c|^2 (2^r-1)} \!\right)\!=\!\phi \det\left(\pmb{B}\left(\frac{P}{|c|^2 (2^r-1)}\right)\right),  \label{re1}
\end{align}
where $r$ is the requested spectral efficiency, the expression in \eqref{re1} is given in \cite[Sec. III.A]{CHI} and $B_{i,j}(x)=\Gamma(M+K-i-j+1,x)$ with $\Gamma(s,x)=\int_{x}^{\infty}t^{s-1}\exp(-t)dt$ denoting the upper incomplete gamma function. Since $\Gamma(s,x)=(s-1)!\exp(-x)\sum_{m=0}^{s-1}\frac{x^m}{m!}$ \cite[8.354.4]{GRA} for $s \in \mathbb{Z}$, we can write
\begin{align}
P_{\text{out}} &= \phi\sum_{\kappa \in p} \text{sgn}(\kappa) \prod_{i=1}^K (M\!+\!K\!-\!i\!-\!\kappa_i)! \exp\left(\!\!-\frac{P}{|c|^2 (2^r-1)}\!\right)\nonumber\\
&\qquad\times \sum_{m=0}^{M+K-i-\kappa_i} \frac{1}{m!} \left(\frac{P}{|c|^2 (2^r-1)}\right)^m\nonumber\\
&\doteq \phi\exp(-KP) \sum_{\kappa \in p} \text{sgn}(\kappa) \prod_{i=1}^K (M+K-i-\kappa_i)! \nonumber\\
&\qquad\times \sum_{m=0}^{M+K-i-\kappa_i} \frac{P^m}{m!}\;\;\;\;\;\;\;\;\;\propto \exp(-K P),
\end{align}
where $p = \{\kappa = (\kappa_1,\dots,\kappa_K)\}$ is the set of $K!$ permutations of $(1,\dots,K)$, $\text{sgn}(\kappa)$ is the signature of the permutation $\kappa$, and the notation $\doteq$ denotes asymptotic expression when $P\rightarrow \infty$. It is worth noting that $|c|^2$ is a random variable; however, we can show that is bounded and therefore is considered as a constant asymptotically i.e., $|c|=|(\pmb{u}+\tau \pmb{l}_{\text{EH}})^H \pmb{u}_{\text{EH}}|\leq \|u+\tau \pmb{l}_{\text{EH}}\| \|\pmb{u}_{\text{EH}}\|=\|\pmb{u}+\tau \pmb{l}_{\text{EH}}\|\leq \|\pmb{u}\|+\tau \|\pmb{l}_{\text{EH}}\| < \infty$ since the elements in $\pmb{u}$, $\pmb{l}_{\text{EH}}$ are finite.

\vspace{-0.3cm}

\end{document}